\def\notes{0}
\def\pagenumtitle{1}
\def\mystatus{1} %

\documentclass[12pt]{article}

\usepackage{header}

\begin{document}

\title{
\vspace{-1em}
Separating Oblivious and Adaptive \\ Differential Privacy under Continual Observation
}
\author{
Mark Bun$^\ast$ \and 
Marco Gaboardi$^\ast$ \and 
Connor Wagaman\thanks{Boston University, \texttt{\{mbun,gaboardi,wagaman\}@bu.edu.}}
}

\date{April 2, 2026}

\maketitle

\vspace{-2em}

\begin{abstract}
We resolve an open question of Jain, Raskhodnikova, Sivakumar, and Smith (ICML 2023; \cite{JainRSS23}) by exhibiting a problem separating differential privacy under continual observation in the oblivious and adaptive settings.
The \emph{continual observation} (a.k.a.\ \emph{continual release}) model  formalizes privacy for streaming algorithms, where data is received over time and output is released at each time step. 
In the oblivious setting, privacy need only hold for data streams fixed in advance; in the adaptive setting, privacy is required even for streams that can be chosen adaptively based on the streaming algorithm's output.

We describe the first explicit separation between the oblivious and adaptive settings. 
The problem showing this separation is based on the correlated vector queries problem of Bun, Steinke, and Ullman (SODA 2017; \cite{BunSU19}). Specifically, we present an $(\varepsilon,0)$-DP algorithm for the oblivious setting that remains accurate for exponentially many time steps in the dimension of the input. On the other hand, we show that every $(\varepsilon,\delta)$-DP adaptive algorithm fails to be accurate after releasing output for only a constant number of time steps.

\end{abstract}

{
\newpage

\hypersetup{linkcolor=black}
\tableofcontents

\newpage
}

\section{Introduction}
\label{sec:intro}

Differential privacy (DP) \cite{DworkMNS16} is the standard framework for guaranteeing individual privacy when releasing statistics about a sensitive dataset. DP was initially considered in the ``batch model,'' where a trusted server holds a static dataset and privately answers queries posed by a data analyst.
However, many real-world datasets change over time, and a well-developed line of work, beginning with \cite{ChanSS11,DworkNPR10}, has investigated privacy guarantees for evolving datasets.
In the standard formulation of this setting, at each time step a new individual's data arrives and an output is released, with accuracy evaluated on the prefix of the dataset seen so far. Privacy requires indistinguishability of the entire output sequence for every pair of input datasets that differ by one individual.

Ensuring privacy in this streaming setting (known as \emph{continual observation} or \emph{continual release}) is challenging, since the arrival of one person's data may affect the algorithm's output at all subsequent time steps. 
Additionally, privacy for this setting is most meaningful when it accounts for adaptively selected inputs, where privacy must hold against an adversary who can specify the input stream in response to previous outputs from the algorithm. (Initial formulations of the continual observation model focused on the oblivious setting, where the input stream is fixed in advance but revealed to the algorithm timestep-by-timestep.)

This adaptive setting was first investigated implicitly by \cite{ThakurtaS13} (later by \cite{LecuyerSVGH19}) and formally defined by \cite{JainRSS23}, and it is especially relevant to machine learning. Stochastic gradient descent relies on accurately estimating the cumulative sum of individual gradients, where---as in the continual observation setting---one observation arrives at each time step (i.e., one gradient is computed at each iteration) and output is released at each time step (i.e., the model is updated).
Moreover, privacy for this problem must hold against adaptive inputs even when the input dataset is static, since the points at which we compute gradients are chosen based on the gradients computed so far. 
These connections between adaptive continual observation and private learning \cite{ThakurtaS13,KairouzMSTTX21} have led to an explosion of work on the continual observation setting---see \cite{PillutlaUpadhyayEtAl25} for a survey of some results.

In their work defining the adaptive model, \cite{JainRSS23} present two problems showing strong separations between the batch model and the oblivious continual observation model (specifically, these problems require additive error that is larger by a factor of $\wt\Omega(T^{1/3})$, where $T$ is the number of time steps at which output is produced). 
They note,
``It is open to separate the two continual release models in the sense our work separates the batch model and the continual release model: by providing problems that require a large error blowup in the more demanding model.'' That is, they ask,
\begin{center}
    \emph{Is there a problem separating oblivious and adaptive \\ differential privacy under continual observation?} \cite{JainRSS23}
\end{center}
We answer this question in the affirmative.

\subsection{Our Results}
\label{sec:results}

We show a problem that separates oblivious and adaptive differential privacy under continual observation. Informally, we prove the following theorem.

\begin{theorem}[\cref{thm:obliv-online-acc,thm:adapt-online-err}, informal]
    There is a problem $\cP^{d,T}$ parametrized by $d,T\in\N$ with the following properties.
    \begin{enumerate}
        \item For all $\eps \in \bigl(0,\frac{3}{2}\bigr]$ there exists $T = 2^{\Omega(\eps^4 d)}$ such that, for all sufficiently large $d\in\N$, there is an algorithm $\cA$ that is $(\eps,0)$-DP under \textbf{oblivious} continual observation and accurately answers $\cP^{d,T}$ for $T$ time steps.
        \item There exists some $T = O(1)$ such that, for all sufficiently large $d\in\N$, there is no algorithm $\cB$ that is $(\frac{1}{5},\frac{1}{20})$-DP under \textbf{adaptive} continual observation and accurately answers $\cP^{d,T}$ for $T$ time steps.
    \end{enumerate}
\end{theorem}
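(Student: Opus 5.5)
We propose to realize $\cP^{d,T}$ as a streaming version of the correlated vector queries (CVQ) problem of \cite{BunSU19}. At each time step $t$ the stream delivers either a data row $x_t\in\{\pm1\}^d$ or a query vector $q_t\in\{\pm1\}^d$ (a query step carries no individual's data). The task at a query step is to output an index $j_t\in[d]$ satisfying two conditions. First, the dataset seen so far must be nearly unanimous on coordinate $j_t$: the fraction of rows with $x_{i,j_t}=q_{t,j_t}$ is close to $1$, or more generally the answer must agree with the dataset's majority on that coordinate. Second, the index $j_t$ must not repeat earlier answers, or must be correlated with $q_t$ in the CVQ sense. The proof then has two parts, one for each item of the theorem.

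\textbf{Oblivious upper bound.} In the oblivious setting the whole stream is fixed in advance, so the query vectors are independent of the algorithm's randomness. The plan is to adapt the \cite{BunSU19} batch upper bound for CVQ, which answers exponentially many queries. Here is why that bound works. Because the queries do not depend on the internal randomness, the algorithm can privately and once select a random set of "good" coordinates on which the data is near-unanimous. It does this with the exponential mechanism or a sparse-vector style step, at cost $\eps$ overall. Afterwards it answers each query by post-processing this private state together with public randomness that is hidden from the stream. The key computation is a union bound over $T$ queries. It shows that, against a fixed query sequence, a random good coordinate is correct except with probability $2^{-\Omega(\eps^4 d)}$ per query; this is the source of $T=2^{\Omega(\eps^4 d)}$. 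Data arriving after the selection must also be handled. We would do this by restarting the selection on a doubling schedule and paying a constant number of compositions, or by arguing that late data cannot flip near-unanimity.

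\textbf{Adaptive lower bound.} The plan is a reduction. From an adaptive continual-observation algorithm $\cB$ we build an attack that violates $(\frac15,\frac1{20})$-DP, following the fingerprinting/reconstruction style lower bound for CVQ in \cite{BunSU19}. The adversary first inserts $n$ rows drawn from a fingerprinting distribution, then issues queries. Each query is chosen adaptively so that it "reveals" the previously output coordinates; that is, $q_t$ is set to agree with $x$ on the indices already output and to be random elsewhere. Accuracy then forces $\cB$ to output coordinates that encode information about a specific row. After $O(1)$ rounds, an adaptive adversary can test membership of a target row with advantage exceeding $e^{1/5}\cdot p+\frac1{20}$. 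The argument runs in three steps: (i) fix $T$ as a constant from the chain of advantages; (ii) show that accuracy across $T$ adaptively chosen queries implies a tracing attack succeeds with constant probability; (iii) compare neighboring streams that differ in the target row to obtain the contradiction.

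The main obstacle is making these two parts coexist. The queries must be designed so that adaptivity (choosing $q_t$ after seeing $j_{t-1}$) is what makes the tracing attack work, while obliviously fixed queries admit the random-coordinate strategy. We would try first to make the query "point at" the algorithm's previous answers, for example by requiring $j_t$ to lie in a set determined by $q_t$ that excludes earlier answers. An oblivious stream cannot anticipate these answers, whereas an adaptive one can force $\cB$ to expose fresh, row-specific coordinates at every step.
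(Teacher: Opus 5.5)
There is a genuine gap. You never commit to a concrete problem, and the ingredients you sketch cannot deliver either half of the theorem.

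\textbf{Upper bound.} Your answers are single indices $j_t\in[d]$.
If accuracy includes ``must not repeat earlier answers,'' every algorithm fails after $T=d$ steps, so $T=2^{\Omega(\varepsilon^4 d)}$ is impossible. If instead correctness is some per-query property of a randomly chosen good coordinate, the per-query failure probability $2^{-\Omega(\varepsilon^4 d)}$ is asserted rather than derived. An index-valued answer contains no sum of $d$ independent terms to concentrate. Indeed, conditioned on the selected set, a uniformly random member fails a fixed query with probability either $0$ or at least $1/d$. In the paper that exponent comes from Hoeffding's inequality applied to $\langle y-\alpha b, v\rangle$ at deviation $\alpha^2 d/100$, which requires the output to be a full vector $y\in\{\pm1\}^d$. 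Your treatment of data arriving between queries also breaks. A doubling restart schedule that reuses earlier rows makes each row enter $\Theta(\log T)=\Theta(\varepsilon^4 d)$ selections, not a constant number, which destroys $(\varepsilon,0)$-DP for constant $\varepsilon$.

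\textbf{Lower bound.} Fingerprinting/tracing attacks need accurate answers on a number of coordinates that grows polynomially with the number of rows $n$. By contrast, $O(1)$ index outputs reveal only $O(\log d)$ bits. The step ``accuracy forces $\mathcal{B}$ to output coordinates that encode information about a specific row'' is exactly the missing argument.

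\textbf{Missing idea.} Each answer must carry $\Theta(d)$ bits about the private data. In the paper the private data are $d$ one-bit individuals $b\in\{\pm1\}^d$, which arrive in a setup phase with no output. Then one vector individual $v_t$ arrives per step. The output $y^{(t)}\in\{\pm1\}^d$ must satisfy $|\langle y^{(t)}-\alpha b, v\rangle|\le \alpha^2 d/100$ for $v=b$ and for every $v\in\{v_1,\ldots,v_t\}$.

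In the oblivious setting, the algorithm releases one randomized-response vector at every step, with each $y_i=b_i$ with probability $\frac{1+\alpha}{2}$. This is $(3\alpha,0)$-DP. Since the $v_t$ are fixed independently of $y$, Hoeffding plus a union bound over $T+1$ constraints gives $T=2^{\Omega(\alpha^4 d)}$.

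In the adaptive setting, the adversary sets $v_{t+1}=y^{(t)}$. Accuracy then forces $\langle y^{(t)},b\rangle\ge \frac{99}{100}\alpha d$ and $|\langle y^{(t)},y^{(t')}\rangle|\le\frac{51}{50}\alpha^2 d$ for $t\neq t'$. The reconstruction lemma of Bun, Steinke, and Ullman then makes the coordinatewise majority agree with $b$ on at least $0.89d$ coordinates after $T=O(1/\alpha^2)$ steps. The challenge bit is placed at a uniformly random coordinate, so the adversary recovers it with probability about $0.89$. This gives total variation distance above $1/2$, contradicting the bound $2\varepsilon+\delta=0.45$ implied by $(\frac15,\frac1{20})$-indistinguishability.
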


The problem yielding this separation is inspired by the \emph{correlated vector queries} problem of \cite{BunSU19}. Rather than focusing on the setting where data arrives over time and the same query is asked at every time step, that paper focuses on a related but orthogonal setting where the dataset is fixed and a new query is asked at every time step. They show exponential separations between the offline, oblivious online, and adaptive online query settings. To show the latter separation, they use correlated vector queries, a search problem where, for a private dataset $x\in\{\pm1\}^d$, each query specifies a set $V\subseteq\{\pm1\}^d$ and asks for an output $y\in\{\pm1\}^d$ that is $\alpha$-correlated with $x$ but nearly uncorrelated with all vectors in $V$ (beyond correlation explained by $x$).
In the oblivious setting, a single randomized response vector suffices to answer exponentially many such queries simultaneously. In contrast, in the adaptive setting, each new query can constrain the next answer to be nearly orthogonal to all previous answers, forcing the mechanism to reveal fresh information about $x$ at every step. After sufficiently many adaptive queries, this accumulated information enables reconstruction of a highly correlated estimate of $x$, contradicting differential privacy.

Our approach is similar, but does not follow in a black-box way from their results. In \cite{BunSU19}, each time step allows the analyst to specify an entirely new query---namely, an arbitrary collection $V \subseteq \set{\pm1}^d$ of vectors that simultaneously constrain the next answer.

In contrast, our continual observation setting is structurally more restricted. The private dataset arrives in two stages. First, $b \in \set{\pm1}^d$ arrives during a setup phase, during which no output is produced (or, alternatively, during which arbitrary output is produced and any answer is considered accurate). Then, during the arrival phase, a total of $T$ vectors $v_1,\ldots,v_T \in \set{\pm1}^d$ arrive, with only \emph{one} new vector arriving per time step. At time $t$, the mechanism must output a vector $y^{(t)}$ that is $\alpha$-correlated with $b$ while remaining nearly orthogonal to the prefix $v_1,\ldots,v_t$. Thus, rather than answering a sequence of independently specified search queries (each of which may contain many new constraints), we repeatedly answer the \emph{same} correlation task as the constraint set grows incrementally over time.

This distinction has important consequences for proving the lower bound. In \cite{BunSU19}, producing many nearly independent $\alpha$-correlated vectors directly enables reconstruction of the dataset $x$, yielding a contradiction to differential privacy. In our setting, however, reconstructing an arbitrary estimate of $b$ is not by itself sufficient to violate privacy. Instead, the reduction must be tailored to the continual observation model and show that the adaptive adversary can use the evolving constraint sequence to recover a specific \emph{challenge bit} embedded in $b$ that distinguishes the neighboring inputs. Establishing this stronger form of reconstruction also requires additional structure beyond a black-box use of the lower bound in \cite{BunSU19}.

In the oblivious setting, the entire sequence $v_1,\ldots,v_T$ is fixed in advance, so releasing $b$ via randomized response satisfies all constraints simultaneously with high probability. In the adaptive setting, each $v_{t+1}$ depends on previous outputs, forcing the mechanism to generate progressively less correlated and more independent views of $b$. We show that even under the one-constraint-per-round restriction, this adaptive process enables recovery of the challenge bit after only $O(1)$ steps, yielding the desired separation.

\subsection{Related Work}

As already described, our work relies heavily on the definition and open question of \cite{JainRSS23}, and the work of \cite{BunSU19}.
Other separation results between various streaming settings also inspired the question we address, including the separation of (nonprivate) oblivious and adaptive streaming from \cite{KaplanMNS21} and the work of \cite{CohenLNSS24} on separating privacy under offline and oblivious continual observation, though we do not make use of their techniques in our results.

There is also a rich literature on privacy under continual observation that inspires and situates our work. The continual observation (a.k.a.\ continual release) model was first proposed by \cite{DworkNPR10,ChanSS11}. The connections between (adaptive) continual observation and private learning shown by \cite{ThakurtaS13,KairouzMSTTX21} have led to a significant line of theoretical and empirical work in this space, with a focus on computing sums under continual observation---see \cite{PillutlaUpadhyayEtAl25} for a survey.
Recent works have developed new adaptively private algorithms with improved error guarantees and computational costs \cite{DenisovMRST22,Choquette-ChooDPGST24,DvijothamMPST24}, and there is a line of work on improving constant factors on the additive error for computing sums \cite{AnderssonP23,FichtenbergerHU23,HenzingerU25,HenzingerKU25}. 
Other work has investigated tasks related to counting distinct elements \cite{JainKRSS23} and various generalizations thereof \cite{QiuY25,AnderssonJS26,AamandCS26}.
There has also been work on continual observation algorithms for graph data in the node-private \cite{JainSW24} and  edge-private \cite{EpastoLMZ25,RaskhodnikovaS25,Zhou26} settings.

\subsection{Open Questions}

A possible direction for future work is to ask, \emph{Is there a ``more natural'' problem separating DP under oblivious and adaptive continual observation?}

Future work could also look for a further separation of the models: \emph{Is there some problem $\cP$ such that every algorithm $\cA$ that is $(\eps,\del)$-DP under oblivious continual observation and solves $\cP$ accurately is blatantly nonprivate under adaptive continual observation?}

The work of \cite{DenisovMRST22} makes some progress on this question. In particular, they show that any algorithm that is $(\eps,0)$-DP under oblivious continual observation remains $(\eps,0)$-DP under adaptive continual observation (though, as we show in \cref{thm:obliv-online-acc,thm:adapt-online-err}, accuracy may fail). Thus, any positive answer to this question must use a problem that requires an approximate-DP algorithm for accuracy.
The work of \cite{DenisovMRST22} also answers a related but orthogonal question, showing that some algorithms can be private under oblivious continual observation yet blatantly nonprivate under adaptive continual observation. As a simplified example of their demonstration, consider an algorithm that holds a sensitive dataset, releases a random $r\in\zo^\lambda$ at time $t=1$, and then outputs $0$ thereafter unless the incoming point is equal to $r$, in which case it reveals the dataset. This is $\paren{0,\tfrac{T}{2^\lambda}}$-DP under oblivious continual observation, but in the adaptive setting the adversary can set the second stream element to $r$ and force blatantly nonprivate behavior at $t=2$.

\section{Preliminaries}

We begin with some background on differential privacy, in both the non-streaming (``batch'') model and the streaming (``continual observation'') settings. Two datasets $x, x' \in \cX^n$ are \emph{neighbors} if they differ in the data of a single individual.

\begin{definition}[$(\eps,\delta)$-indistinguishability]
Let $\eps > 0$ and $\delta \in [0, 1]$.
Two random variables $R_1,R_2$ over outcome space $\cY$ are \emph{$(\eps,\delta)$-indistinguishable} (denoted $R_1\simed R_2$) if, for all $Y\subseteq \cY$, we have
\[
    \pr{R_1\in Y} \leq e^{\eps} \pr{R_2 \in Y} + \delta
    \quad \text{and} \quad
    \pr{R_2\in Y} \leq e^{\eps} \pr{R_1 \in Y} + \delta.
\]
\end{definition}

\begin{definition}[Differential privacy (DP) \cite{DworkMNS16}]
\label{def:dp batch}
Let $\eps > 0$ and $\delta \in [0, 1]$.
A randomized algorithm $\cM:\cU^* \to \cY$ is \emph{$(\eps,\del)$-DP} if for all pairs of neighboring inputs $x, x' \in \cU^*$, the distributions $\cM(x)$ and $\cM(x')$  are $(\eps,\del)$-indistinguishable: $\cM(x)\simed\cM(x').$
\end{definition}

\subsection{Differential Privacy under Continual Observation}

Differential privacy was generalized to the streaming setting by \cite{ChanSS11,DworkNPR10} and requires that the entire sequence of outputs remain indistinguishable up to a change in one individual's data in the input stream.
The oblivious setting, where an input stream is fixed up front but is made visible to the algorithm timestep-by-timestep, was first defined by \cite{DworkNPR10,ChanSS11}.
The adaptive setting, where each input is specified immediately prior to being revealed to the algorithm (and can thus be chosen in response to output from the algorithm), was first considered implicitly by \cite{ThakurtaS13} and formally defined by \cite{JainRSS23}.

\begin{definition}[DP under \textbf{oblivious} continual observation \cite{DworkNPR10,ChanSS11}]
\label{def:crt-obliv}
    Let $\mech$ be an algorithm that, given a data stream $\xvec = (x_1, \ldots, x_T)$, produces an output stream $\avec = (a_1,\ldots, a_T)$.
    The algorithm $\mech$ is \emph{$(\eps,\del)$-DP under \textbf{oblivious} continual observation} if, for all neighboring streams $\xvec,\xvec'\in \cX^T$ (i.e., $x_t\neq x'_t$ for at most one $t\in[T]$), we have
    \(
        \mech(\xvec) \simed \mech(\xvec').
    \)
\end{definition}

\begin{definition}[DP under \textbf{adaptive} continual observation \cite{JainRSS23}]
\label{def:crt-adapt}
    The \emph{view of $\adv$} in the privacy game $\Pi_{\mech,\adv}$ (\cref{alg:privacy-game}) consists of $\adv$'s internal randomness and the transcript of messages it sends and receives. Let \emph{$V_{\mech,\adv}^{(\side)}$} denote $\adv$'s  view  at the end of the game run with input $\side \in \{L,R\}$.
    
    An algorithm $\mech$ is \emph{$(\eps,\del)$-DP under \textbf{adaptive} continual observation} if, for all adversaries~$\adv$, we have
    \(
        V_{\mech,\adv}^{(L)}
        \simed
        V_{\mech,\adv}^{(R)}.
    \)
\end{definition}

\begin{algorithm}[ht]
\caption{Privacy game $\Pi_{\mech,\adv}$ for the adaptive continual release model}
\label{alg:privacy-game}
    \begin{algorithmic}[1]
        \Statex \textbf{Input:}  time horizon $T \in \N$, $\side \in \{L, R\}$ (not known to $\adv$).
        \For{\text{$t=1$ to $T$}}
        \State $\adv$ outputs $\rdtype_t \in \{\chall, \reg\}$; $\chall$ is chosen once during the game.
            \If{$\rdtype_t = \reg$}
                \State $\adv$ outputs $x_t \in \cX$ which is sent to $\mech$.
            \EndIf
            \If{$\rdtype_t = \chall$}
                \State $t^* \gets t$. 
                \State $\adv$ outputs $(x_t^{(L)},x_t^{(R)}) \in \cX^2$.
                \State $x_t^{(\side)}$ is sent to $\mech$.  
            \EndIf
            \State $\mech$ outputs $a_t$ which is given to $\adv$.
        \EndFor
    \end{algorithmic}
\end{algorithm}

\cite{CohenLNSS24} defines an offline version of \cref{def:crt-obliv}, in which an algorithm receives the entire input stream $\xvec$ at once, rather than item-by-item, and then outputs a full sequence $(a_1,\ldots,a_T)$. We say an algorithm for this setting is \emph{$(\eps,\del)$-DP under \textbf{offline} continual observation}.

\section{Separating Oblivious and Adaptive Privacy}
\label{sec:obliv-adaptive}

We show there is a problem that is ``easier'' to solve in the oblivious continual observation setting than in the adaptive continual observation setting---that is, for a specified privacy guarantee, we can give accurate answers at far more time steps in the former setting than in the latter setting. The problem yielding this separation is inspired by the problem of \emph{correlated vector queries} described by \cite{BunSU19} and defines accuracy with a very similar loss function. However, whereas we work with a data stream and ask the same query at every time step, \cite{BunSU19} works with a static dataset and asks a different query at every time step (see \cref{sec:results} for a further discussion on differences between our settings).

In \cref{def:sep-adapt-nonadapt} we introduce the problem and accuracy definition yielding our separation result. We next describe an algorithm for the oblivious setting that releases accurate answers at many time steps (\cref{sec:obliv-alg}), and then give an upper bound on the number of time steps at which accurate answers can be released in the adaptive setting (\cref{sec:adapt-bd}).

\begin{definition}[Problem separating online and online adaptive models]
\label{def:sep-adapt-nonadapt}
    Let $\alpha\in(0,1)$, $d\in\N$, and $T\in\N$ be parameters; we denote by $\cP^{\alpha,d,T}$ an instance of the separation problem with these parameters.
    Sensitive data arrives in two phases: a \emph{setup phase} and an \emph{arrival phase}. Outputs are produced only during the arrival phase.
    \begin{itemize}
        \item \textbf{Setup phase:} $d$ ``one-bit'' individuals $b_1,\ldots,b_d\in\set{\pm1}$ arrive; no outputs are returned.
        \item \textbf{Arrival phase:} $T$ ``vector'' individuals $v_1,\ldots,v_T\in\set{\pm1}^d$ arrive. After each arrival $t\in[T]$, the algorithm outputs $\yout\in\set{\pm1}^d$.
    \end{itemize}
    Let $b=(b_1,\ldots,b_d)$ and $v_{[t]}=(v_1,\ldots,v_t)$. The (boolean-valued) loss at time $t$ is
    \[
        L_t(b,\yout) \; \Longleftrightarrow \;
        \bigl|\langle \yout-\alpha b,b\rangle\bigr| \le \tfrac{\alpha^2 d}{100}
        \;\wedge\;
        \forall v\in v_{[t]},~ \bigl|\langle \yout-\alpha b,v\rangle\bigr| \le \tfrac{\alpha^2 d}{100}.
    \]
    An algorithm $\cM$ is \emph{$\beta$-accurate for $\cP^{\alpha,d,T}$} if, with probability at least $1-\beta$, the entire sequence of outputs $\yout[1],\ldots,\yout[T]$ satisfies the loss function.
\end{definition}

Informally, the loss function says that the output vector $y$ should be close to the vector $\alpha b$ (e.g., for small $\alpha$, the vector $y$ must have a small but precise correlation with $b$), and have as little correlation as possible with each $v\in v_{[t]}$ (roughly, some $v\in v_{[t]}$ may themselves be close to $\alpha b$, so correlation with these vectors is permitted, but the answer should otherwise be nearly orthogonal to all $v\in v_{[t]}$). 

\subsection{Algorithm for Oblivious Continual Observation}
\label{sec:obliv-alg}

We show there is a DP algorithm for this problem in the oblivious setting that can run accurately for exponentially many time steps (in the dimension of the vector individuals).

The algorithm $\cM$ for this setting is inspired by the algorithm in the proof of \cite[Theorem 4.1]{BunSU19}. At a high level, $\cM$ runs a randomized response-like algorithm \cite{Warner65} once on each one-bit individual, stores the resulting vector $y$, and returns this same vector $y$ at every time step. Privacy roughly follows from the privacy of randomized response, and accuracy follows from Hoeffding's inequality.

\begin{theorem}[Accuracy for oblivious continual observation]
\label{thm:obliv-online-acc}
    For every $0 < \alpha < 1/2$, there exists some $T = 2^{\Omega(\alpha^4 d)}$ such that, for every sufficiently large $d\in \N$, there is an algorithm $\cM$ that is $(1/T)$-accurate for $\cP^{\alpha, d, T}$ and is $(3\alpha,0)$-DP under \textbf{oblivious} continual observation.
\end{theorem}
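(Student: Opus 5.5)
The algorithm $\cM$ will be a single randomized-response pass over $b$, with the resulting vector released unchanged at every time step. During the setup phase, for each $i\in[d]$ we independently set $y_i=b_i$ with probability $\frac{1+\alpha}{2}$ and $y_i=-b_i$ otherwise, so that $\E[y_i]=\alpha b_i$. At every step $t$ of the arrival phase we output $\yout=y$ and ignore $v_t$ entirely.

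\textbf{Privacy.} The output stream is a deterministic function of $y$, and $y$ depends only on $b$. Take two neighboring streams.
\begin{enumerate}
\item If they differ in some vector individual $v_t$, the output distributions are identical.
\item If they differ in a single bit $b_i$, only the coordinate $y_i$ changes. The likelihood ratio of any outcome is then at most $\frac{1+\alpha}{1-\alpha}$.
\end{enumerate}
Using $\ln\frac{1+\alpha}{1-\alpha}\le 3\alpha$ for $\alpha<1/2$ gives $(3\alpha,0)$-DP. Indeed, $2\,\mathrm{artanh}(\alpha)\le 2\alpha+\frac{2\alpha^3}{3(1-\alpha^2)}$, which is at most $3\alpha$ on this range. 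The same bound also holds in the adaptive model, though that is not needed here.

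\textbf{Accuracy.} We use the fact that in the oblivious setting the whole stream, and in particular $v_1,\dots,v_T$, is fixed before $y$ is drawn. Fix any $u\in\{b,v_1,\dots,v_T\}$, all of which lie in $\set{\pm1}^d$. Then $\langle y-\alpha b,u\rangle=\sum_i (y_i-\alpha b_i)u_i$ is a sum of $d$ independent mean-zero terms, each in an interval of length $2$. Hoeffding's inequality gives
\[
\Pr\Bigl[\bigl|\langle y-\alpha b,u\rangle\bigr|>\tfrac{\alpha^2 d}{100}\Bigr]\le 2\exp\!\Bigl(-\tfrac{\alpha^4 d}{2\cdot 100^2}\Bigr).
\]
A union bound over the at most $T+1$ vectors covers every constraint at every time step at once, because the output never changes and the constraint set only grows up to $\{b,v_1,\dots,v_T\}$. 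The failure probability is therefore at most $2(T+1)e^{-c\alpha^4 d}$ with $c=1/20000$.

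\textbf{Choice of $T$.} Set $T=\lfloor e^{c\alpha^4 d/3}\rfloor=2^{\Omega(\alpha^4 d)}$. Then $2(T+1)T\,e^{-c\alpha^4 d}\le 4T^2e^{-c\alpha^4 d}\le 4e^{-c\alpha^4 d/3}\le 1$ once $d$ is large enough relative to $\alpha$; this is where "sufficiently large $d$" is used. Hence the failure probability is at most $1/T$, i.e.\ $\cM$ is $(1/T)$-accurate.

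The only real subtlety is the accuracy union bound. It relies on $v_{[T]}$ being independent of $y$, which the oblivious model guarantees, and this is exactly the point that breaks in the adaptive model. Beyond that, the constants just need to be checked.
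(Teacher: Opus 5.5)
Your proposal is correct and follows essentially the same approach as the paper: the same biased randomized-response vector released unchanged at every step, the same $\frac{1+\alpha}{1-\alpha}$ likelihood-ratio bound giving $(3\alpha,0)$-DP, and Hoeffding plus a union bound over $\{b,v_1,\dots,v_T\}$, which is valid because the oblivious stream is fixed independently of $y$. Your explicit constant $c=1/20000$ and the choice $T=\lfloor e^{c\alpha^4 d/3}\rfloor$ spell out the step the paper leaves as ``an appropriate choice of $T$.''
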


\begin{proof}[Proof of \cref{thm:obliv-online-acc}]    
    We first define the algorithm $\cM$ for this setting. Let $\yuniv = (y_1,\cdots,y_d)$ be the random vector obtained by independently setting, for each $i\in[d]$,
    \[
        y_i =
        \begin{cases}
            + b_i & \text{w.p.\ $\frac{1+\alpha}{2}$} \\ 
            - b_i & \text{w.p.\ $\frac{1-\alpha}{2}$}.
        \end{cases}
    \]
    Define $\cM$ as the algorithm that releases this same vector $\yuniv$ at every time step $t\in [T]$.

    We now analyze the privacy of $\cM$. Since the output of $\cM$ depends only on $(b_1,\ldots, b_d)$, it suffices to consider two datasets $b$ and $b'$ that differ on some index $i$. Since each coordinate of $\yuniv$ is generated independently, it suffices to bound the likelihood ratio of outputs for coordinate $i$ as
    \[
        \frac{\Pr[\cM(b_i) = y_i ]}{\Pr[\cM(b'_i) = y_i ] }
        \leq \frac{(1+\alpha)/2}{(1-\alpha)/2}
        = \frac{1+\alpha}{1-\alpha}.
    \]
    We use the fact that, for $x\in[0,1)$ we have $\ln(1+x)\leq x$ and $\ln(1-x) \geq -\frac{x}{1-x}$. This gives $\ln\paren{\frac{1+\alpha}{1 - \alpha}} \leq 3\alpha$. A symmetric argument shows that the natural log of the likelihood ratio is bounded below by $-3\alpha$, so the algorithm is $(3\alpha, 0)$-DP.

    We next analyze the algorithm's accuracy. This analysis follows almost verbatim from the accuracy analysis of \cite[Theorem 4.1]{BunSU19}, which we repeat below for completeness.

    To prove accuracy, observe that since the output $y$ does not depend on the ``vector'' individuals, we can analyze $\cM$ as if the ``vector'' individuals were fixed and given all at once.
    First, observe that $\Ex[\yuniv] = \alpha b$.
    Thus we have
    \[
        \Ex_{\yuniv}[\langle \yuniv - \alpha b, b \rangle] = 0
        \quad\text{and}\quad
        \forall t\in[T],\; \Ex_{\yuniv}[\langle \yuniv - \alpha b, v_t\rangle] = 0.
    \]
    Since $b$ and every vector $v_1,\ldots, v_T$ is fixed independently of $\yuniv$, and the coordinates of $y$ are computed using independent randomness, both $\langle \yuniv , b \rangle$ and $\langle \yuniv , v_t \rangle$ for all $t\in[T]$ are the sum of $d$ independent $\pmo$-valued random variables. 
    Thus, we can apply Hoeffding's inequality\footnote{We use the following statement of Hoeffding's inequality: if $Z_1,\dots,Z_n$ are independent $\pmo$-valued random variables, and ${Z} = \sum_{i=1}^{n} Z_i$, then
    $\Pr\bracks{\abs{Z - \Ex[Z]} > C\sqrt{n}} \leq 2e^{-C^2/2}$.
    }
    and a union bound to conclude there is some absolute constant $c > 0$ such that
    \begin{align*}
        &\Pr_{\yuniv}\bracks{|\langle \yuniv - \alpha b, b \rangle| > \frac{\alpha^2 d}{100}} \leq 2\exp\paren{-c\alpha^4 d} \text{, and}\\
        &\Pr_\yuniv\bracks{\exists t \in [T]\textrm{ s.t. }|\langle \yuniv - \alpha b, v_t \rangle| > \frac{\alpha^2 d}{100}} \leq 2T\exp\paren{-c\alpha^4 d}.
    \end{align*}
    The theorem now follows by setting an appropriate choice of $T = 2^{\Omega(\alpha^4 d)}$ such that $2(T+1)\cdot\exp\left({-c\alpha^4 d}\right) \leq 1/T$.
    Thus $\cM$ is $(1/T)$-accurate for $\cP^{\alpha,d,T}$, completing the proof.
\end{proof}

\subsection{Error Lower Bound for Adaptive Continual Observation}
\label{sec:adapt-bd}

We now show that no DP algorithm for the adaptive setting can run accurately for even some constant (in the dimension of the input) number of time steps.

\begin{theorem}[Error required for adaptive continual observation]
\label{thm:adapt-online-err}
    For every $0 < \alpha < 1/2$, there exists some $T = O(1/\alpha^2)$ such that for all sufficiently large $d \in \N$, there is no algorithm $\cM$ that is $(1/100)$-accurate for the problem $\cP^{\alpha,d,T}$ and is $\paren{\frac{1}{5}, \frac{1}{20}}$-DP under \textbf{adaptive} continual observation.
\end{theorem}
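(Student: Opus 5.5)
We argue by contradiction: from an $M$ that is $(1/100)$-accurate and $(\frac15,\frac1{20})$-DP under adaptive continual observation, we build an adversary $\adv$ for the privacy game of \cref{alg:privacy-game} whose views under $L$ and $R$ are far from $(\frac15,\frac1{20})$-indistinguishable. The challenge round is placed in the setup phase. $\adv$ draws a uniform $b\in\set{\pm1}^d$ and a uniform coordinate $i^*\in[d]$. It feeds $b_j$ for $j\neq i^*$ as regular rounds. At $i^*$ it submits the challenge pair $(+1,-1)$, so the hidden stream has $b_{i^*}=+1$ under $L$ and $-1$ under $R$. During setup the outputs carry no information relevant to the attack, and $\adv$ ignores them.

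In the arrival phase, $\adv$ adapts each constraint to the previous output. It sets $v_1$ arbitrarily, say $v_1=\mathbf 1$, and for $t\ge1$ sets $v_{t+1}=\mathrm{sign}(y^{(t)})$, that is, $y^{(t)}$ itself. By accuracy, with probability at least $0.99$ every output satisfies $L_t$. The constraint involving $v_{s+1}=y^{(s)}$ then gives, for all $s<t\le T$,
\[
\bigl|\langle y^{(t)}-\alpha b,\,y^{(s)}\rangle\bigr|\le \alpha^2d/100 .
\]
Writing $y^{(s)}=\alpha b+e_s$ and using $|\langle y^{(s)}-\alpha b,b\rangle|\le\alpha^2 d/100$, we get $\langle e_t,e_s\rangle\approx 0$ up to $O(\alpha^2 d/100)$ for $s<t$. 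We also have $\|e_t\|^2=\|y^{(t)}\|^2-2\alpha\langle y^{(t)},b\rangle+\alpha^2 d=d-\alpha^2 d\pm O(\alpha^3 d/100)$. So the errors $e_1,\dots,e_T$ are nearly orthogonal, with norms about $\sqrt{d}$.

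Averaging then gives reconstruction. Let $\bar y=\frac1T\sum_t y^{(t)}=\alpha b+\bar e$. Near-orthogonality gives $\|\bar e\|^2\le d/T+O(\alpha^2 d/100)$, so with $T=C/\alpha^2$ we get $\|\bar e\|^2\le(\frac1C+\frac1{100})\alpha^2 d$. Hence $\mathrm{sign}(\bar y)$ disagrees with $b$ on at most a $\frac{\|\bar e\|^2}{\alpha^2 d}\le\frac1C+\frac1{100}$ fraction of coordinates, since each wrong coordinate needs $|\bar e_j|\ge\alpha$. Taking $C=100$, at most about $2\%$ of coordinates are wrong.

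The adversary's guess is $\mathrm{sign}(\bar y_{i^*})$. Coordinate $i^*$ is the main obstacle. The reconstruction guarantee is only global, and a mechanism could in principle concentrate its errors on the challenge coordinate. We handle this with the random placement of $i^*$. From $M$'s perspective, the coordinate carrying the challenge is indistinguishable from the others: under, say, side $L$, the stream is a uniformly random $b$ with $b_{i^*}=+1$, and $i^*$ is uniform and independent of everything $M$ sees, apart from the value itself. So the expected fraction of errors equals the probability of error at $i^*$. This gives $\Pr[\text{guess}=b_{i^*}]\ge 0.99\cdot0.98-\text{small}\ge0.95$, where "small" accounts for the $d$-independent slack, and the analysis holds for all sufficiently large $d$.

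Equivalently, averaging over $b_{i^*}$ uniform, the adversary recovers the challenge bit with probability at least $0.95$. Then the event $E=\{\text{guess}=+1\}$ has $\Pr[E\mid L]\ge 0.9$ and $\Pr[E\mid R]\le 0.1$ (the average over sides is what we control, so we use symmetry of both sides to get each). But $(\frac15,\frac1{20})$-DP requires $0.9\le e^{1/5}\cdot0.1+0.05\approx0.17$, which is a contradiction.

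We will need to check the constants carefully: the $\alpha^2 d/100$ slack terms summed over pairs, the bound $\langle e_t,b\rangle\lesssim\alpha^2d/100$, and the reduction from average success to per-side success.
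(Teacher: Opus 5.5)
Your proposal is correct and follows the paper's proof: the same feedback adversary $v_{t+1}=y^{(t)}$, the same coordinate-wise majority reconstruction, and the same uniformly random placement of the challenge coordinate. Your near-orthogonality bound on $e_t=y^{(t)}-\alpha b$ is just an inline proof of \cref{lem:reconstruction}, with cleaner bookkeeping ($k=T\approx 100/\alpha^2$) and slightly better constants.

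One imprecision: because you fix the pair to $(+1,-1)$, under side $L$ alone $i^*$ is not independent of $\cM$'s view (given $b$, it is uniform on the $+1$ coordinates). So ``error at $i^*$ equals the expected error fraction'' holds only for the $\frac12$--$\frac12$ mixture of the two sides, where $b$ is uniform and independent of $i^*$. That mixture is what you ultimately use, and $\Pr[E\mid L]\ge 0.9$, $\Pr[E\mid R]\le 0.1$ then follow from the average being at least $0.95$ simply because each probability is at most $1$; no symmetry argument is needed. The paper sidesteps this by also randomizing the pair.
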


Our proof is inspired by the structure of the attack used in proving \cite[Theorem 4.2]{BunSU19}. However, our result does not follow directly from \cite{BunSU19}, and moreover does not seem to follow directly from their attack.

At a high level, our attack uses the following strategy. Let $\yout[t]$ be the output produced after the arrival of vector individual $v_t$. Set the next vector-valued individual $v_{t+1} = \yout[t]$. 
Because $v_2,\ldots, v_t$ contains all outputs $y^{(1)},\ldots, \yout[t-1]$, every accurate algorithm will need to recompute a new output in response to seeing $v_{t+1} = \yout[t]$ as the next vector individual (otherwise the output will be too close to $v_{t+1}$ to satisfy the loss function). Intuitively, this recomputed query will leak information about the private dataset. We show that after $T$ queries for some  $T=O(1/\alpha^2)$, a ``reconstruction lemma'' \cite[Lemma 4.3]{BunSU19} means that the private dataset can be reconstructed with high probability, which violates privacy.

Our proof relies heavily on the following reconstruction lemma from \cite{BunSU19}.

\begin{lemma}[Reconstruction Lemma 4.3 from \cite{BunSU19}]
\label{lem:reconstruction}
    Fix parameters $p,q\in [0,1]$.
    Let $x \in \pmo^{d}$ and $\yout[1], \cdots, \yout[k] \in \pmo^d$ be vectors such that
    \begin{align*}
        &\forall 1 \leq j \leq k, ~~ \langle \yout[j] , x \rangle \geq p d, \quad \text{and} \\
        &\forall 1 \leq j < j' \leq k, ~~ |\langle \yout[j], \yout[j'] \rangle| \leq q d.
    \end{align*}
    If we let $\tilde{x} = \mathrm{sign}(\sum_{j=1}^{k} \yout[j]) \in \pmo^{d}$ be the coordinate-wise majority of $\yout[1],\ldots,\yout[k]$, then
    \[
        \langle \tilde{x}, x \rangle
        \geq
        \left(1-\frac{2}{p^2 k} - \frac{2(q-p^2)}{p^2} \right) d.
    \]
\end{lemma}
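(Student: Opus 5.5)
We prove the lemma with a second-moment argument on the unnormalized sum $s=\sum_{j=1}^{k}\yout[j]\in\mathbb{Z}^d$, comparing $s$ to the "ideal" vector $pk\,x$. The idea is that every coordinate on which $\tilde{x}$ disagrees with $x$ (or on which $s_i=0$, whatever convention $\mathrm{sign}(0)$ uses) is a coordinate where $s$ is far from $pk\,x$. The total squared distance $\|s-pk\,x\|_2^2$ is controlled by the two hypotheses, so only a few coordinates can be bad.

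\textbf{Step 1: the two basic estimates.} From the first hypothesis, summing over $j$ gives
\[
\langle s,x\rangle=\sum_{j=1}^{k}\langle \yout[j],x\rangle\ge pkd .
\]
From the second hypothesis, using $\|\yout[j]\|_2^2=d$ and $q\ge 0$,
\[
\|s\|_2^2=\sum_{j}\|\yout[j]\|_2^2+\sum_{j\neq j'}\langle \yout[j],\yout[j']\rangle\le kd+k(k-1)qd\le kd+k^2qd .
\]

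\textbf{Step 2: distance from the ideal vector.} Expanding the square and using $\|x\|_2^2=d$ together with Step 1,
\[
\|s-pk\,x\|_2^2=\|s\|_2^2-2pk\langle s,x\rangle+p^2k^2 d\le kd+k^2qd-2p^2k^2d+p^2k^2d=kd+k^2(q-p^2)d .
\]

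\textbf{Step 3: counting bad coordinates.} Let
\[
B=\{\,i\in[d] : x_i s_i\le 0\,\}.
\]
For each $i\in B$ we have $|s_i-pk\,x_i|=|x_is_i-pk|\ge pk$. Hence
\[
|B|\,p^2k^2\le \|s-pk\,x\|_2^2,
\quad\text{so}\quad
|B|\le \frac{d}{p^2k}+\frac{(q-p^2)d}{p^2}.
\]
Every coordinate outside $B$ satisfies $\tilde{x}_i=x_i$, since $x_is_i>0$ there. Therefore
\[
\langle\tilde{x},x\rangle\ge (d-|B|)-|B|=d-2|B|,
\]
which is exactly the claimed bound $\bigl(1-\tfrac{2}{p^2k}-\tfrac{2(q-p^2)}{p^2}\bigr)d$.

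The only real idea is in Step 2: centering at $pk\,x$ rather than applying Cauchy--Schwarz directly to $\langle s,x\rangle$. The direct route gives only a multiplicative bound of the form $p^2/(q+1/k)$, not the additive form stated. Once we center at $pk\,x$, the rest is bookkeeping. The one detail to watch is treating ties $s_i=0$ as bad coordinates, which the definition of $B$ already does.
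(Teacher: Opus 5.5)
Your proof is correct. Bounding $\|s-pk\,x\|_2^2 \le kd+k^2(q-p^2)d$, then charging each wrong-sign or tied coordinate at least $p^2k^2$, gives exactly the stated bound (with $p>0$ implicit from the $1/p^2$). The paper does not prove this lemma itself but imports it from \cite{BunSU19}, and your centered second-moment counting argument is, up to rescaling $s$ by $1/(pk)$, the same as the argument given there.
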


We also use \cref{fact:tv-dp} from the literature on differential privacy, which shows the following relationship between $(\eps,\del)$-indistinguishability and total variation distance. 

\begin{fact}[$\dtv$ and $(\eps,\del)$-indistinguishability]
\label{fact:tv-dp}
    Let $\eps > 0$ and $\delta \in [0, 1]$, and let $P$ and $Q$ be two probability distributions such that $P\simed Q$.
    Then
    \[
        \dtv\bigl( P , Q \bigr) \leq (e^\eps - 1) + \del.
    \]
    Moreover, if $\eps\leq 1$, then
    \(
        \dtv\bigl( \cA(x) , \cA(x') \bigr) \leq 2\eps + \del.
    \)
\end{fact}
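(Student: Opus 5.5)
We prove the first bound straight from the definition of $(\eps,\del)$-indistinguishability, applied to the event on which $P$ exceeds $Q$. Recall that for distributions over a common outcome space $\cY$,
\[
    \dtv(P,Q) = \sup_{Y\subseteq \cY} \bigl(P(Y) - Q(Y)\bigr),
\]
and the supremum is attained at $Y^* = \{y : P(y) > Q(y)\}$ (for general measures, use densities with respect to $P+Q$). By symmetry of total variation it suffices to bound $P(Y)-Q(Y)$ for an arbitrary event $Y$, using only the first of the two inequalities in the definition.

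Fix any $Y\subseteq\cY$. Since $P\simed Q$, we have $P(Y) \le e^\eps Q(Y) + \del$. Hence
\[
    P(Y) - Q(Y) \le (e^\eps - 1)\,Q(Y) + \del \le (e^\eps-1) + \del,
\]
where the last step uses $Q(Y)\le 1$ and $e^\eps - 1 \ge 0$ because $\eps>0$. Taking the supremum over $Y$ gives $\dtv(P,Q)\le (e^\eps-1)+\del$. If one prefers a symmetric argument, apply the second inequality in the definition to the complementary direction $Q(Y)-P(Y)$; the bound is the same.

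For the ``moreover'' clause, read $\cA(x),\cA(x')$ as the distributions $P,Q$; for a DP algorithm on neighbors they are $(\eps,\del)$-indistinguishable by \cref{def:dp batch}. It then suffices to show $e^\eps - 1 \le 2\eps$ for $\eps\in(0,1]$. The function $f(\eps)=e^\eps-1-2\eps$ is convex with $f(0)=0$ and $f(1)=e-3<0$, so by convexity $f\le 0$ on $[0,1]$. Combining this with the first bound gives $\dtv \le 2\eps+\del$.

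There is no real obstacle here. The only points needing care are identifying $\dtv$ as a supremum over events, which is standard, and the elementary inequality $e^\eps\le 1+2\eps$ on $[0,1]$, which the convexity check settles.
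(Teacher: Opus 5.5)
Your proof is correct and complete. The paper gives no proof of this fact; it cites it as standard from the differential privacy literature. Your argument is exactly the standard derivation: bound $P(Y)-Q(Y)\le (e^{\varepsilon}-1)Q(Y)+\delta\le (e^{\varepsilon}-1)+\delta$ for every event $Y$, then use convexity to get $e^{\varepsilon}-1\le 2\varepsilon$ on $(0,1]$. You also correctly read the paper's $\mathcal{A}(x),\mathcal{A}(x')$ in the ``moreover'' clause as the distributions $P,Q$.

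As a side remark, you could apply both inequalities of the definition to the optimal event $Y^*$ and add them. That gives the sharper bound $\frac{e^{\varepsilon}-1+2\delta}{e^{\varepsilon}+1}$. Your cruder bound already suffices for the paper's application, where $2\cdot\frac{1}{5}+\frac{1}{20}=0.45<\frac{1}{2}$.
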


We now prove \cref{thm:adapt-online-err}. 

\begin{proof}[Proof of \cref{thm:adapt-online-err}]
    We show there exists some $T = O(1/\alpha^2)$ such that the output $\yout[1],\ldots, \yout[T]$ of any algorithm $\cM$ that is $(1/100)$-accurate for $\cP^{\alpha,d,T}$ can be used to identify the value of the challenge bit in the privacy game (\cref{alg:privacy-game}) with probability at least $\frac{3}{4}$. In other words, the total variation distance between output distributions on two inputs that differ on the challenge bit is at least $\frac{1}{2}$, which by \cref{fact:tv-dp} contradicts the fact that the views of the adversary must be $\paren{\frac{1}{5}, \frac{1}{20}}$-indistinguishable for two inputs that differ in the challenge bit. In this proof, assume for contradiction that $\cM$ is $\bparen{\frac{1}{5}, \frac{1}{20}}$-DP under adaptive continual observation and $(1/100)$-accurate.

    Define the following adversary $\adv$. The adversary fixes a uniformly random bit string $b\in \pmo^d$. The adversary chooses a uniformly random index $\chall \in[d]$ and, for that index, chooses a uniformly random pair $\pair\in \set{(-1,+1), (+1,-1)}$. At each one-bit step $i\in[d]$, the adversary outputs type $\reg$ and sends $b_i$ to the privacy game, with the exception of time step $\chall$, at which the adversary presents $\chall$ and sends $\pair$.
    After completing the one-bit phase (at one-bit step $d$), the adversary sends a uniformly random bit string $v_1 \in \pmo^d$ as the first vector individual, and receives output $\yout[1]$. At all subsequent time steps $t\in \set{2,\ldots, T}$, the adversary sends $v_t = \yout[t-1]$.

    We now show that no algorithm $\cM$ guaranteeing $(1/100)$-accuracy for $\cP^{\alpha, d, T}$ is $\paren{\frac{1}{5}, \frac{1}{20}}$-DP under adaptive continual observation.
    The subsequent steps follow from the proof of \cite[Theorem 4.3]{BunSU19}. 
    Let $p = 99\alpha/100$ and $q = 51 \alpha^2/50$.
    Since $\cM$ is assumed to be accurate for $T$ time steps, with probability $99/100$, we obtain vectors $\yout[1],\dots,\yout[T] \in \pmo^{d}$ such that
    \begin{align*}
        \forall t\in[T] \quad
        \angles{\yout , b} &\geq \angles{\alpha b, b } - \abs{\angles{\yout - \alpha b, b}} \\
         &\geq \alpha d - \frac{\alpha^2d}{100} \\
         &\geq p d, \quad \text{and also}
    \end{align*}
    \begin{align*}
        \forall 1 \leq t < t' \leq T \quad
        \abs{\angles{ \yout, \yout[t'] }}
        &\leq \abs{\angles{\alpha b, \yout }} +  \abs{\angles{\yout[t'] - \alpha b, \yout}} \\
        & \leq \alpha \abs{\angles{\yout, b }} + \frac{\alpha^2 d}{100} \\
        & \leq \alpha\left( \abs{\angles{\alpha \bbits, \bbits }} +  \abs{\angles{\yout - \alpha\bbits, \bbits }}\right) + \frac{\alpha^2 d}{100} \\
        & \leq \alpha^2 d + \frac{\alpha^3 d}{100} + \frac{\alpha^2 d}{100} \\
        &\leq \frac{51}{50}\alpha^2 d 
        = q d.
    \end{align*}
Thus, by Lemma~\ref{lem:reconstruction}, for  $T \geq 1 + 100/\alpha^2$, $k\geq \max\set{T,d}$, and $\tilde{\bbits} = \mathrm{sign}(\sum_{j=1}^{k} \yout[j])$, we have
\begin{align*}
    \langle \tilde{\bbits}, \bbits \rangle 
    &\geq  \left(1-\frac{2}{p^2 k} - \frac{2(q-p^2)}{p^2} \right) d\\
    &\geq \left(1 - \frac{2}{(99\alpha/100)^2 d} - \frac{2(51\alpha^2/50-(99\alpha/100)^2)}{(99\alpha/100)^2} \right)d \\
    &= \left(1 - \frac{2(100/99)^2}{100} - 2\left(\frac{(51/50)-(99/100)^2}{(99/100)^2}\right)\right)d \\
    &\geq 0.89d.
\end{align*}

Recall that the challenge bit's location and value were chosen by the adversary uniformly at random. Therefore, $\cM$ cannot distinguish the challenge location from non-challenge locations, so because at least a $0.89$ fraction of the one-bit individuals' values were reconstructed correctly, this means the probability that the challenge value was correctly reconstructed by the adversary is at least $0.89$. This implies that the TV distance between adversary views is greater than $\frac{1}{2}$, which by \cref{fact:tv-dp} contradicts the fact that the views of the adversary must be $\paren{\frac{1}{5}, \frac{1}{20}}$-indistinguishable.
\end{proof}

\newpage
\printbibliography[heading=bibintoc]

\end{document}